\documentclass[11pt]{article}

\usepackage[margin=1in]{geometry}
\usepackage[T1]{fontenc}
\usepackage[utf8]{inputenc}
\usepackage{lmodern}
\usepackage{microtype}
\usepackage{setspace}
\usepackage[nocompress]{cite}
\usepackage{amsmath,amssymb,amsthm}
\usepackage{mathtools}

\usepackage{authblk}
\usepackage{algorithm}
\usepackage{algorithmic}
\usepackage{graphicx}
\usepackage{placeins}

\usepackage{tikz}
\usetikzlibrary{matrix,arrows,positioning}

\usepackage[hidelinks]{hyperref}
\usepackage[nameinlink,capitalize,noabbrev]{cleveref}
\usepackage{xspace}

\newtheorem{theorem}{Theorem}
\newtheorem{lemma}[theorem]{Lemma}
\newtheorem{proposition}[theorem]{Proposition}
\newtheorem{corollary}[theorem]{Corollary}
\theoremstyle{definition}
\newtheorem{definition}[theorem]{Definition}
\newtheorem{example}[theorem]{Example}
\theoremstyle{remark}
\newtheorem{remark}[theorem]{Remark}

\newcommand{\Turnpike}{Turnpike\xspace}
\newcommand{\Beltway}{Beltway\xspace}

\newcommand{\MILP}{MILP\xspace}
\newcommand{\ILP}{ILP\xspace}
\newcommand{\LP}{LP\xspace}

\newcommand{\N}{\mathbb{N}}
\newcommand{\Z}{\mathbb{Z}}

\newcommand{\R}{\mathbb{R}}

\newcommand{\Binary}{\{0,1\}}
\newcommand{\indices}[1]{\left[#1\right]}

\newcommand{\Y}{Y}
\newcommand{\Ytilde}{\widetilde{Y}}
\newcommand{\Yhat}{\widehat{Y}}

\newcommand{\Phat}{\widehat{P}}

\newcommand{\rR}{(r,R)}

\newcommand{\Nx}{\indices{n}}
\newcommand{\NxxN}{\mathcal{I}_n}     % monotone intervals (i,j) with i<j
\newcommand{\NxxNxxN}{\mathcal{T}_n}  % monotone refinements (i,j,k) with i<j<k

\newcommand{\ruler}{\rho}

\theoremstyle{definition}
\newtheorem{system}[theorem]{System}

\title{\textbf{Turnpike with Uncertain Measurements:\\
Triangle-Equality ILP with a Deterministic Recovery Guarantee}}

\author[1]{C.~S.~Elder}
\author[1]{Guillaume~Mar\c{c}ais}
\author[1]{Carl~Kingsford}

\affil[1]{Ray and Stephanie Lane Computational Biology Department, Carnegie Mellon University\\
5000 Forbes Ave., Pittsburgh, PA 15213, USA\vspace{-1ex}}
\date{\today}
\begin{document}
\maketitle

\begin{abstract}
We study \Turnpike with uncertain measurements: reconstructing a one-dimensional point set from an unlabeled multiset of pairwise distances under bounded noise and rounding.
We give a combinatorial characterization of realizability via a multi-matching that labels interval indices by distinct distance values while satisfying all triangle equalities.
This yields an \ILP based on the triangle equality whose constraint structure depends only on the two-partition set $\mathcal{P}_y=\{(r,s,t): y_r+y_s=y_t\}$ (with $\mu_r\ge 2$ when $r=s$) and a natural \LP relaxation with $\{0,1\}$-coefficient constraints.
Integral solutions certify realizability and output an explicit assignment matrix, enabling a modular \emph{assignment-first, regression-second} pipeline for downstream coordinate estimation.
Under bounded noise followed by rounding, we prove a deterministic separation condition under which $\mathcal{P}_y$ is recovered exactly, so the \ILP/\LP receives the same combinatorial input as in the noiseless case.
Experiments illustrate integrality behavior and degradation outside the provable regime.
\end{abstract}

\noindent\textbf{Keywords:} Turnpike problem, partial digest, integer linear programming, linear programming, relaxation, unlabeled distances, partitions, error correction, inverse problem.

\section{Introduction}
\label{sec:introduction}

The \emph{\Turnpike problem} is a classical inverse problem: reconstruct a one-dimensional point set from an unlabeled multiset of pairwise distances.
The origin story imagines a highway: we are given the multiset of all pairwise distances between exits, but the exit mile markers themselves have been lost. 
Can we recover a configuration of exits that is consistent with the
observed distances?

Formally, let $x \in \R^n$ have strictly increasing coordinates. 
Then the \emph{difference multiset} associated with $x$ is \(
    \Delta x \coloneqq \{x_j-x_i : 1\le i<j\le n\}
\), which is a multiset containing $m \coloneqq \binom{n}{2}$ positive reals.
An instance of \Turnpike is a multiset $Y$ of $m$ positive reals, and the decision problem asks whether there exists such an $x$ with $\Delta x=Y$.

A multiset $\Y$ is represented by a pair $(y, \mu)$ such that $y$ lists the values contained in $\Y$ in strictly decreasing order \(
    y_1 > \dots > y_{m'} >0
\) where the multiplicity vector $\mu \in \N^{m'}_{++}$ records the number of occurrences, i.e., that a value $y_k$ is present $\mu_k \ge 1$ times.
We say $\Y$ is \emph{realizable} when $\Y=\Delta x$ for some $x \in \R^n$.

\Turnpike is equivalent to the Partial Digest problem from restriction site mapping in computational biology~\cite{skiena_PDP_1994,alizadeh_digestion_protocol_1995} and is closely related to distance-geometry reconstruction tasks in crystallography, tandem mass spectrometry, and molecular error-correcting codes~\cite{Huang_ReconstructingPointSetsDistributions_2021,fomin_cyclopeptide_circular_sums_2015,gabrys_polymer_ECCs_turnpike_2020}.
In restriction mapping, for example, a single-enzyme partial digestion produces fragments whose lengths coincide with all pairwise distances between restriction sites; the multiset $Y$ records fragment lengths, and a solution $x$ specifies restriction site locations along the strand (up to translation and reflection).

Exact recovery from error-free distances is neither known to be NP-complete nor to admit a polynomial-time algorithm~\cite{chen_pairwise_reconstruction_2009}.
A backtracking algorithm with worst-case exponential runtime but polynomial expected runtime under absolutely-continuous input models was given in~\cite{skiena_distances_1990,skiena_PDP_1994}.
Many additional approaches have been proposed, including polynomial factorization methods for exact instances, pruning strategies, and hardness and counterexample constructions for related variants~\cite{dakic_turnpike_2000,blazewicz_SPDP_hardness_2009,chen_pairwise_reconstruction_2009,jaganathan_pairwise_distances_2013,nadimi_fast_PDP_2011,wang_fast_SPDP_2023}.

\paragraph*{Noisy \Turnpike and measurement uncertainty}
Experimental pipelines that naturally produce \Turnpike-like instances (e.g., partial digestion and tandem mass
spectrometry) inevitably include measurement uncertainty, as well as missing or duplicated observations.
A standard formulation is \emph{bounded-error \Turnpike}: the input is an observed multiset $\Ytilde$ together
with an error radius $r\ge 0$, and the question is whether there exist points $x$ and a matching between the distances $\Delta x$ and the observations $\Ytilde$ such that each matched pair differs by at most~$r$.
Allowing bounded additive or multiplicative errors makes Partial Digest (and hence \Turnpike) strongly NP-hard~\cite{cieliebak_UPDP_hardness_2005}, with related hardness results holding for many other noisy
variants~\cite{Huang_ReconstructingPointSetsDistributions_2021,elder_beltway_turnpike_2024}.
A closely related circular analogue is the \Beltway problem; reductions connect noisy \Beltway and noisy
\Turnpike~\cite{elder_beltway_turnpike_2024}.
In the presence of uncertainty, backtracking approaches routinely explore exponentially many states, even on
inputs sampled from non-adversarial distributions~\cite{skiena_PDP_1994,cieliebak_UPDP_hardness_2005,Huang_ReconstructingPointSetsDistributions_2021,elder_beltway_turnpike_2024}.

These noisy variants are strongly NP-hard, so we do not aim for an exact, polynomial-time algorithm.
Instead, we focus on formulations whose integral solutions certify realizability and whose fractional solutions still retain structured combinatorial information.

\subsection{This work: a triangle-based assignment viewpoint}
\label{subsec:this-work}

This work develops a complementary viewpoint that treats unlabeled distances as a \emph{combinatorial labeling problem} first and a regression problem second.
The key object is a \emph{ruler}: an indexed pairwise-difference table in which entry $(i,j)$ stores $x_j-x_i$.
Unlike the multiset $\Delta x$, a ruler retains interval indices, and the identity $x_k-x_i=(x_j-x_i)+(x_k-x_j)$ becomes a family of linear consistency constraints (the \emph{triangle equalities}).
Realizability of $\Y$ can therefore be rephrased as the existence of an assignment of distance values to interval indices that satisfies triangle equalities on every triple $i<j<k$.

This lens leads to a natural \MILP formulation for exact \Turnpike, which we reformulate into an \ILP using only combinatorial information.
A distinctive feature of this formulation is that it outputs an \emph{assignment matrix} $\widehat P$ (a multi-matching between intervals and distinct distance values). 
A coordinate estimate $\widehat x$ can then be computed afterward by solving an independent regression problem with any preferred numeric precision.
This makes permutation metrics over $\widehat P$ the natural choice for primary evaluation, with coordinate-wise metrics treated as secondary.

In summary, we build a triangle-equality \ILP for exact \Turnpike, show that this formulation extends to the uncertain setting under a deterministic two-partition recovery condition, and empirically validate that performance tracks this regime and degrades rapidly outside it.

\paragraph*{Main results and algorithmic viewpoint}
\begin{itemize}
  \item \textbf{Triangle feasibility as realizability.}
  We characterize realizability of $\Y$ as the existence of a multi-matching between interval indices and distinct distance values whose induced ruler satisfies all triangle equalities (Lemma~\ref{lem:realizable-matching}). 
  This yields a simple \MILP formulation (Proposition~\ref{prop:MILP-turnpike}).

  \item \textbf{A triangle-based \ILP/\LP parameterized by two-partitions.}
  We eliminate explicit coordinate variables by expressing triangle consistency solely through additive
  relations among the distinct distance values. The resulting \ILP (Proposition~\ref{prop:triangle-ilp}) and \LP relaxation (System~\ref{sys:triangle-lp}) are parameterized by the two-partition set
  $\mathcal{P}_y=\{(r,s,t): y_r+y_s=y_t\}$ (with $\mu_r\ge 2$ when $r=s$), which can be enumerated in $O(m'^2)$ time by a two-pointer scan (Algorithm~\ref{alg:two-pointer}).
  The formulation uses $O(n^2m') = O(n^4)$ assignment variables (since $m' = n(n-1)/2$ in the worst case) and---when using the basis reduction of Lemma~\ref{lem:triangle-basis}---$O(n^2|\mathcal{P}_y|)$ triangle variables; since $|\mathcal{P}_y|=O(m'^2)$, this is $O(n^2m'^2) = O(n^6)$.

  \item \textbf{Deterministic robustness under bounded noise and rounding.}
  For bounded-error instances followed by rounding with radius $R$, we prove a separation condition ensuring that the exact two-partition structure is preserved. 
  When $\mathrm{gap}_\star>6(r+R)$, where $\mathrm{gap}_\star$ lower-bounds $|y_r+y_s-y_t|$ over all triples with $y_r+y_s\ne y_t$ (formal definition in Section~\ref{subsec:two-partition-recovery}), the rounded data recover the same $\mathcal{P}_y$ as the noiseless instance (Theorem~\ref{thm:two-partition-recovery}), i.e., the
  triangle-equality formulation receives the same input as that of the ground truth.

  \item \textbf{Empirical evidence and an assignment-first evaluation philosophy.}
  We include experiments on synthetic instances and partial-digest benchmarks illustrating integrality behavior across synthetic distributions and noise/rounding regimes, and the resulting assignment matrices. 
  The experiments emphasize metrics on assignments (interval labels and permutation distances) rather than on downstream coordinate regression.
  In particular, the $(r,R)$ phase diagram in Figure~\ref{fig:noisy-phase} exhibits rapid degradation in two-partition recovery once the separation regime of Theorem~\ref{thm:two-partition-recovery} is violated.
\end{itemize}

Limitations: our deterministic guarantee assumes bounded noise followed by rounding and does not cover missing or duplicated distances.
Although the formulations use polynomially many constraints and variables in $n$ and $m'$, they remain impractically large beyond approximately $100$ points; we restrict our experiments accordingly.
Finally, extending the approach to broader experimental noise models is left for future work.

\section{Preliminaries}
\label{sec:preliminaries}

This section fixes notation and introduces the objects used throughout: distance multisets, interval indices, rulers, and the multi-matching viewpoint that converts realizability into an assignment-feasibility question.
We conclude by formalizing the $(r,R)$ noise-and-rounding model used later for deterministic recovery guarantees.

\paragraph*{Quick reference (objects and notation).}
We use $\indices{n}=\{1,2,\dots,n\}$ and represent the input distances as a multiset $\Y=(y,\mu)$ with distinct values $y_1>\cdots>y_{m'}>0$ (total multiplicity $m=\binom{n}{2}$).
Intervals are $\NxxN=\{(i,j)\in\indices{n}^2 : i<j\}$; a multi-matching labels each interval by an index $r\in[m']$, yielding an assignment matrix $(P^r_{ij})$.
Triangle equalities are encoded by the two-partition set $\mathcal{P}_y=\{(r,s,t): y_r+y_s=y_t\}$ (with $\mu_r\ge 2$ when $r=s$), which parameterizes our triangle \ILP/\LP.
In the $(r,R)$ model, $r$ bounds measurement error and $R$ is the rounding radius.

\subsection{Notation and multisets}
\label{subsec:notation}

We write $\R$, $\Z$, and $\N$ for the reals, integers, and naturals, respectively; subscripts indicate
standard restrictions (e.g., $\R_+$ and $\R_{++}$ for nonnegative and positive reals). For an integer $n\ge 1$
we use $\indices{n}=\{1,2,\dots,n\}$.

A numeric multiset is represented as an ordered pair $\Y=(y,\mu)$ where $y\in\R^{m'}$ lists the distinct
values in strictly decreasing order $y_1>\cdots>y_{m'}>0$, and $\mu\in\N^{m'}_{++}$ records the multiplicity of
each value. The total cardinality (counting multiplicity) is $m=\sum_{r=1}^{m'}\mu_r$.

As in Section~\ref{sec:introduction}, for $x\in\R^n$ with strictly increasing entries we write
$\Delta x=\{x_j-x_i: 1\le i<j\le n\}$ for the multiset of pairwise distances. When $\Delta x=(y,\mu)$, the
total multiplicity satisfies $m=\binom{n}{2}$.

\begin{example}
For $x=(0,2,5,9)$, the distances are $\{2,3,4,5,7,9\}$, so $m'=m=6$ and $\mu=\mathbf{1}$.
For $x=(0,2,4,6)$, the distances are $\{2,2,2,4,4,6\}$, so $y=(6,4,2)$ and $\mu=(1,2,3)$.
\end{example}

\subsection{Intervals, refinements, and rulers}
\label{subsec:rulers-prelim}

We index unknown distances by their endpoints.
A \emph{monotone interval} is an ordered pair in \(
    \NxxN \coloneqq \{(i,j)\in\Nx\times\Nx : i<j\}
\); a \emph{monotone refinement} is an ordered triple in \(
    \NxxNxxN \coloneqq \{(i,j,k)\in\Nx^3 : i<j<k\}.
\)
When unambiguous, we shorten $(i, j)$ to $ij$ and $(i,j,k)$ to $ijk$.

\begin{definition}
  \label{def:ruler}
  A \emph{ruler} on $\Nx$ is a matrix $\ruler\in\R^{\Nx\times\Nx}$ for which there exists $x\in\R^n$ such that
  $\ruler_{ij}=x_j-x_i$ for all $i,j\in\Nx$. We call $x$ a \emph{realization} of $\ruler$.
  The ruler is \emph{monotone} if $\ruler_{ij}>0$ for all $ij\in\NxxN$.
\end{definition}

Every ruler satisfies skew-symmetry ($\ruler_{ij}=-\ruler_{ji}$) and the triangle equalities
$\ruler_{ik}=\ruler_{ij}+\ruler_{jk}$.
The converse also holds.

\begin{lemma}[Triangle equalities characterize rulers]
  \label{lem:ruler-triangle}
  A matrix $\ruler\in\R^{\Nx\times\Nx}$ is a ruler if and only if it satisfies
  $\ruler_{ik}=\ruler_{ij}+\ruler_{jk}$ for all $i,j,k\in\Nx$.
\end{lemma}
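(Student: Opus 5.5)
The forward direction is immediate from the definition: if $\ruler_{ij}=x_j-x_i$ for some $x\in\R^n$, then $\ruler_{ij}+\ruler_{jk}=(x_j-x_i)+(x_k-x_j)=x_k-x_i=\ruler_{ik}$ for every triple $i,j,k\in\Nx$, with no ordering restriction. So the work lies in the converse, and the plan is to build an explicit realization by anchoring at the first index.

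Assume $\ruler_{ik}=\ruler_{ij}+\ruler_{jk}$ for all $i,j,k\in\Nx$, and define
\[
x_i \coloneqq \ruler_{1i}\qquad (i\in\Nx).
\]
We then need $\ruler_{ij}=x_j-x_i$ for all $i,j\in\Nx$. Taking the triangle equality at the triple $(1,i,j)$ gives $\ruler_{1j}=\ruler_{1i}+\ruler_{ij}$, i.e.\ $\ruler_{ij}=\ruler_{1j}-\ruler_{1i}=x_j-x_i$, which is exactly the required identity.

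It is worth recording the degenerate consequences, because they show the hypothesis over \emph{all} triples (not only monotone refinements) already carries everything. Setting $i=j=k$ gives $\ruler_{ii}=2\ruler_{ii}$, so the diagonal vanishes. Setting $k=i$ then gives $0=\ruler_{ii}=\ruler_{ij}+\ruler_{ji}$, so $\ruler$ is skew-symmetric. These facts are not needed for the one-line argument above, since the triple $(1,i,j)$ with arbitrary $i,j$ covers every case, including $i=1$, $j=1$, and $i=j$.

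There is no real obstacle here; the proof is straightforward. The one point to watch is that the lemma quantifies over all $i,j,k\in\Nx$, which is what allows us to use the triple $(1,i,j)$ even when $i>j$ or when indices coincide. If the hypothesis were restricted to monotone refinements $ijk\in\NxxNxxN$, one would additionally need skew-symmetry and a zero diagonal as separate assumptions, together with a short case split on the relative order of $1$, $i$ and $j$. I would also note that the realization is unique up to translation, since any two realizations differ by a constant vector; this fact is used later when regressing coordinates from an assignment.
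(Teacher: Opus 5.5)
Your proof is correct and matches the paper's: both directions are the same, and both set $x_i \coloneqq \ruler_{1i}$ in the converse. The only difference is that the paper applies the triangle equality to the triple $(i,1,j)$ and then uses skew-symmetry (obtained from the degenerate cases), whereas you apply it to $(1,i,j)$ directly, so you do not need the skew-symmetry step.
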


\begin{proof}
If $\ruler_{ij}=x_j-x_i$ for some $x$, then
$\ruler_{ik}=(x_k-x_i)=(x_j-x_i)+(x_k-x_j)=\ruler_{ij}+\ruler_{jk}$.

Conversely, suppose $\ruler$ satisfies the triangle equalities for all triples.
Setting $k=i$ gives $\ruler_{ii}=\ruler_{ij}+\ruler_{ji}$, so $\ruler_{ii}=0$ and $\ruler_{ji}=-\ruler_{ij}$.
Define $x_i \coloneqq \ruler_{1i}$. Then for any $i,j$,
$\ruler_{ij}=\ruler_{i1}+\ruler_{1j}=-(\ruler_{1i})+\ruler_{1j}=x_j-x_i$.
\end{proof}

\begin{remark}
Lemma~\ref{lem:ruler-triangle} shows that triangle equalities alone force the expected normalization
$\ruler_{ii}=0$ and skew-symmetry. In particular, a monotone ruler determines a strictly increasing point set
up to translation.
\end{remark}

Because pairwise differences are translation invariant, realizability is unaffected by global shifts of $x$:
replacing $x$ by $x+c\mathbf{1}$ leaves $\Delta x$ unchanged.

\subsection{Multi-matchings and realizability}
\label{subsec:matching}

We now express realizability of $\Y=(y,\mu)$ as the existence of a multi-matching between intervals and
distinct distance values.

\begin{definition}
  \label{def:multimatching}
  Let $\mathcal{X}$ be a finite set, and let $\Y=(y,\mu)$ be a multiset with distinct values
  $y_1>\cdots>y_{m'}>0$.
  A \emph{multi-matching} from $\mathcal{X}$ to $\Y$ is a map $P:\mathcal{X}\to[m']$ such that
  $|P^{-1}(r)|=\mu_r$ for every $r\in[m']$.
\end{definition}

When $\mathcal{X}=\NxxN$, the map $P$ assigns each interval $ij$ to one of the distinct distance values. We
often identify $P$ with its one-hot encoding: binary variables $P^r_{ij}$ indicate whether interval $ij$ is
assigned value $y_r$.

Given a multi-matching $P:\NxxN\to[m']$, define the induced \emph{oriented difference matrix}
$\ruler^{(P)}\in\R^{\Nx\times\Nx}$ by
\[
  \ruler^{(P)}_{ij}=
  \begin{cases}
    y_{P_{ij}}, & i<j,\\
    0,          & i=j,\\
    -y_{P_{ji}},& i>j.
  \end{cases}
\]

\begin{lemma}
  \label{lem:realizable-matching}
  A multiset $\Y=(y,\mu)$ is realizable if and only if there exists a multi-matching $P:\NxxN\to[m']$ with
  multiplicity~$\mu$ such that $\ruler^{(P)}$ is a monotone ruler.
\end{lemma}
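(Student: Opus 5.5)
We prove the two directions separately. Both reduce to matching the interval set $\NxxN$ against the multiset $\Y$, with Lemma~\ref{lem:ruler-triangle} supplying the link between rulers and point sets.

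\emph{($\Rightarrow$)} Suppose $\Y=\Delta x$ for some strictly increasing $x\in\R^n$. For each interval $ij\in\NxxN$ the difference $x_j-x_i$ is a positive value of $\Y$, so it equals $y_r$ for a unique $r\in[m']$ (the $y_r$ are distinct). Set $P_{ij}\coloneqq r$. Since $\Delta x$ and $\Y$ agree as multisets, exactly $\mu_r$ intervals have difference $y_r$, so $|P^{-1}(r)|=\mu_r$ and $P$ is a multi-matching with multiplicity $\mu$. By construction $\ruler^{(P)}_{ij}=x_j-x_i$ for $i<j$. The diagonal entries are $0=x_i-x_i$, and for $i>j$ we get $-y_{P_{ji}}=-(x_i-x_j)=x_j-x_i$. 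Hence $\ruler^{(P)}$ is a ruler with realization $x$, and it is monotone because $x$ is strictly increasing.

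\emph{($\Leftarrow$)} Suppose $P$ is a multi-matching with multiplicity $\mu$ and $\ruler^{(P)}$ is a monotone ruler. By Definition~\ref{def:ruler} there is an $x\in\R^n$ with $\ruler^{(P)}_{ij}=x_j-x_i$; one explicit choice, from the proof of Lemma~\ref{lem:ruler-triangle}, is $x_i=\ruler^{(P)}_{1i}$. Monotonicity gives $x_j-x_i>0$ for all $i<j$, so $x$ is strictly increasing. Then
\[
\Delta x=\{\!\{x_j-x_i : ij\in\NxxN\}\!\}=\{\!\{y_{P_{ij}} : ij\in\NxxN\}\!\}.
\]
In this multiset each $y_r$ appears $|P^{-1}(r)|=\mu_r$ times and no other values occur. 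Therefore $\Delta x=(y,\mu)=\Y$, and $\Y$ is realizable.

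There is no real obstacle here. The only points needing care are two bookkeeping facts. First, distinctness of the $y_r$ makes $P$ well defined in the forward direction. Second, the multiset equality is exactly the statement that the fiber sizes $|P^{-1}(r)|$ equal the multiplicities $\mu_r$. This also forces $m=\binom{n}{2}$ implicitly, since $P$ is defined on all of $\NxxN$.
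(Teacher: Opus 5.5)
Your proof is correct and follows the paper's argument: label each interval by the unique index of its difference for the forward direction, and for the converse take a realization of the ruler, use monotonicity to get strict increase, and read off multiplicities from the fiber sizes $|P^{-1}(r)|=\mu_r$. You are slightly more explicit than the paper, since you check the diagonal and $i>j$ entries and cite the ruler definition directly rather than Lemma~\ref{lem:ruler-triangle}, but the route is the same.
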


\begin{proof}
If $\Y$ is realizable, choose $x\in\R^n$ with strictly increasing entries and $\Delta x=\Y$. For each
interval $ij\in\NxxN$, pick the unique index $r$ with $x_j-x_i=y_r$ and set $P_{ij} \coloneqq r$. The multiplicities
in $\Delta x$ ensure $|P^{-1}(r)|=\mu_r$. By construction, $\ruler^{(P)}_{ij}=x_j-x_i$, so $\ruler^{(P)}$ is a
monotone ruler.

Conversely, if $P$ is a multi-matching and $\ruler^{(P)}$ is a monotone ruler, then by
Lemma~\ref{lem:ruler-triangle} there exists $x\in\R^n$ with $\ruler^{(P)}_{ij}=x_j-x_i$. Monotonicity implies
$x$ is strictly increasing up to translation. For $ij\in\NxxN$ we have $x_j-x_i=\ruler^{(P)}_{ij}=y_{P_{ij}}$,
and the multi-matching multiplicities imply $\Delta x=\Y$.
\end{proof}

\begin{example}[One triangle constraint]
For $n=4$, the refinement $(1,2,4)$ enforces $(x_2-x_1)+(x_4-x_2)=x_4-x_1$.
In the assignment view, this means the labels on intervals $12$, $24$, and $14$ must form a two-partition
$(r,s,t)\in\mathcal{P}_y$ with $y_r+y_s=y_t$.
\end{example}

\subsection{An \texorpdfstring{\MILP}{MILP} formulation}
\label{subsec:milp-turnpike}

Lemma~\ref{lem:realizable-matching} leads to a natural mixed-integer formulation that couples a discrete
assignment with continuous coordinates.

\begin{proposition}
  \label{prop:MILP-turnpike}
  Let $\Y=(y,\mu)$ have total multiplicity $m=\binom{n}{2}$. The following \MILP\ is feasible if and only if
  $\Y$ is realizable:
  \begin{align*}
    &x\in\R^n,\qquad P^r_{ij}\in\Binary &&\forall\, ij\in\NxxN,\ r\in[m'],\\
    &\sum_{r=1}^{m'} P^r_{ij}=1         &&\forall\, ij\in\NxxN,\\
    &\sum_{ij\in\NxxN} P^r_{ij}=\mu_r   &&\forall\, r\in[m'],\\
    &x_j-x_i=\sum_{r=1}^{m'} y_r P^r_{ij} &&\forall\, ij\in\NxxN.
  \end{align*}
  Optionally one may fix translation by adding $x_1=0$.
\end{proposition}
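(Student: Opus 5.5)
The plan is to reduce both directions to Lemma~\ref{lem:realizable-matching}, using the identification of a binary vector $(P^r_{ij})$ satisfying the first two constraint families with a multi-matching $P:\NxxN\to[m']$.

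\emph{Encoding step.} Given binary $P^r_{ij}$ with $\sum_r P^r_{ij}=1$ for every $ij\in\NxxN$, each interval has exactly one index $r$ with $P^r_{ij}=1$; define $P_{ij}$ to be that $r$. Then $\sum_{ij}P^r_{ij}=|P^{-1}(r)|$, so the column constraints say exactly that $|P^{-1}(r)|=\mu_r$, i.e.\ $P$ is a multi-matching with multiplicity $\mu$. Conversely, the one-hot encoding of any multi-matching satisfies both families. In either case $\sum_r y_rP^r_{ij}=y_{P_{ij}}=\ruler^{(P)}_{ij}$ for $i<j$.

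\emph{Realizable $\Rightarrow$ feasible.} If $\Y$ is realizable, Lemma~\ref{lem:realizable-matching} (via its proof) gives a strictly increasing $x$ and a multi-matching $P$ with $x_j-x_i=y_{P_{ij}}$ for all $ij\in\NxxN$. Taking this $x$ and the one-hot encoding of $P$ satisfies every constraint. Translating $x$ by $-x_1$ keeps all differences, which handles the optional constraint $x_1=0$.

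\emph{Feasible $\Rightarrow$ realizable.} From a feasible $(x,P)$, form the multi-matching $P$ as above. For $i<j$ the last constraint gives $\ruler^{(P)}_{ij}=x_j-x_i$. Skew-symmetry and the zero diagonal in the definition of $\ruler^{(P)}$ extend this to all $i,j$, so $\ruler^{(P)}$ is a ruler realized by $x$. Since $y_r>0$ for all $r$, it is monotone. Lemma~\ref{lem:realizable-matching} then yields realizability; directly, $x$ is strictly increasing and $\Delta x=\Y$ by the multiplicity constraints.

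The only step needing care is this last one: checking that the equality constraints, imposed only on monotone intervals, determine the full oriented matrix $\ruler^{(P)}$ as a ruler. This follows immediately from how $\ruler^{(P)}$ is defined off the upper triangle, so I expect no real obstacle.
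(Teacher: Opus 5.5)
Your proof is correct and follows essentially the same route as the paper. In one direction you one-hot decode $P$ and use the last constraint to get $x_j-x_i=y_{P_{ij}}>0$, so the multiplicity constraints give $\Delta x=\Y$; in the other you take $x$ and $P$ from Lemma~\ref{lem:realizable-matching} and one-hot encode. The detour through $\ruler^{(P)}$ being a monotone ruler is harmless but unnecessary, as your own closing ``directly'' remark shows.
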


\begin{proof}
If the system is feasible, define $P_{ij}$ as the unique $r$ with $P^r_{ij}=1$. The last constraint gives
$x_j-x_i=y_{P_{ij}}$, so $\Delta x=\Y$ by the multiplicity constraints.

Conversely, if $\Y$ is realizable, choose $x$ and a multi-matching $P$ as in Lemma~\ref{lem:realizable-matching}
and set $P^r_{ij}=\mathbf{1}[P_{ij}=r]$. Then all constraints hold.
\end{proof}

Proposition~\ref{prop:MILP-turnpike} correctly captures realizability, but its LP relaxation depends on the
numerical encoding of $y$ through the constraints $x_j-x_i=\sum_r y_r P^r_{ij}$. In
Section~\ref{sec:triangle-lp} we derive a triangle-based formulation whose constraint structure depends
instead on additive relations among the distinct distance values.

\subsection{Noisy variants and the \texorpdfstring{$\rR$}{(r,R)} model}
\label{sec:noisy-variants}

We now formalize the bounded-error model and the rounding step used in our robustness guarantees.

In bounded-error noisy \Turnpike, the input consists of an observed multiset $\Ytilde=(\tilde y,\tilde\mu)$
together with an error radius $r\ge 0$. One asks whether there exist a point set $x\in\R^n$, a realizable
multiset $\Y=(y,\mu)$ with $\Delta x=\Y$, and a matching between the elements of $\Y$ and those of $\Ytilde$
such that each matched pair differs by at most~$r$.

To reason about additive relations among noisy distances, we also introduce a rounding radius $R\ge 0$. 
For each distinct observed value $\tilde y_\ell$, define its rounded value $\hat y_\ell \coloneqq \operatorname{round}_R(\tilde y_\ell)\in R\Z$, where $\operatorname{round}_R$ rounds to the nearest multiple of $R$ (ties may be broken arbitrarily). 
When $R=0$ we interpret $\operatorname{round}_0$ as the identity map.
After rounding, identical values are aggregated: we write $\Yhat=(\hat y,\hat\mu)$ for the resulting multiset of \emph{distinct} rounded values and their multiplicities.

The pair $\rR$ parameterizes our deterministic robustness statements: $r$ bounds measurement uncertainty while $R$ bounds discretization error introduced by rounding. When the distinct true distances are sufficiently separated relative to $r+R$, Section~\ref{subsec:two-partition-recovery} shows that the two-partition structure can be recovered exactly from $\Yhat$, enabling the triangle-based \ILP/\LP\ formulations to operate on the correct combinatorial input.

\section{A Triangle-Based \texorpdfstring{\ILP}{ILP} and \texorpdfstring{\LP}{LP} Relaxation}
\label{sec:triangle-lp}

The \MILP formulation in Proposition~\ref{prop:MILP-turnpike} enforces ruler consistency through explicit
numerical equalities of the form $x_j-x_i=\sum_r y_r P^r_{ij}$, which couple discrete assignment variables and
continuous coordinates.
In this section, we eliminate the coordinate variables entirely by enforcing ruler structure using only the additive relations among the distinct distance values.
This yields a triangle-based \ILP whose constraint structure depends only on indices and on the set of \emph{two-partitions} present in $y$, together with a natural \LP relaxation.

We first define two-partitions and $\mathcal{P}_y$, then present the triangle \ILP, its \LP relaxation, two structural reductions, and finally the robustness guarantee under noise and rounding.

\subsection{Two-partitions}
\label{subsec:two-partitions}

Let $\Y=(y,\mu)$ with distinct values $y_1 > \cdots > y_{m'} > 0$.
A \emph{two-partition} is an ordered triple \(
    (r,s,t) \in [m']^{3}
\) such that \(
  y_r + y_s = y_t,
\) with the additional requirement that $\mu_r \ge 2$ when $r = s$, ensuring the multiset contains two copies of $y_r$ when it is used twice.
We collect all such relations in \(
  \mathcal{P}_y \coloneqq \{(r,s,t)\in[m']^{3} : y_r+y_s=y_t\ \text{and}\ (r\neq s\ \text{or}\ \mu_r\ge 2)\}.
\)
This set can be enumerated in $O(m'^2)$ time by a two-pointer scan for each fixed target $t\in[m']$.

\begin{algorithm}
  \caption{Two-pointer enumeration of $\mathcal{P}_y(t)$ for a fixed target $t$}
  \label{alg:two-pointer}
  \footnotesize
  \begin{algorithmic}[1]
    \REQUIRE Distinct values $y_1>\cdots>y_{m'}>0$ with multiplicities $\mu$, and target $t\in[m']$.
    \ENSURE $\mathcal{P}_y(t)\coloneqq\{(r,s,t): y_r+y_s=y_t\ \text{and}\ (r\neq s\ \text{or}\ \mu_r\ge 2)\}$.
    \STATE $r\gets t+1$, $s\gets m'$, $P\gets\emptyset$
    \WHILE{$r<s$}
      \STATE $\sigma\gets y_r+y_s$
      \IF{$\sigma=y_t$}
        \STATE add $(r,s,t)$ and $(s,r,t)$ to $P$
        \STATE $r\gets r+1$, $s\gets s-1$
      \ELSIF{$\sigma>y_t$}
        \STATE $r\gets r+1$
      \ELSE
        \STATE $s\gets s-1$
      \ENDIF
    \ENDWHILE
    \IF{$r=s$ \textbf{and} $\mu_r\ge 2$ \textbf{and} $2y_r=y_t$}
      \STATE add $(r,r,t)$ to $P$
    \ENDIF
    \STATE \textbf{return} $P$
  \end{algorithmic}
\end{algorithm}

Running Algorithm~\ref{alg:two-pointer} for all $t\in[m']$ enumerates $\mathcal{P}_y$ in $O(m'^2)$ time, removing the extra $\log m'$ factor from a naive per-pair binary search.

\subsection{A triangle-based \ILP}
\label{subsec:triangle-ilp}

We now introduce an \ILP over assignment variables $P$ and triangle variables~$T$.
Intuitively, $P$ specifies which distance value labels each interval $ij$, while $T$ enforces that every
refinement $i<j<k$ respects the triangle equality through a compatible two-partition in~$\mathcal{P}_y$.

For each interval $ij\in\NxxN$ and each distance index $r\in[m']$, we use a binary variable $P^r_{ij}$.
For each refinement $ijk\in\NxxNxxN$ and each two-partition $(r,s,t)\in\mathcal{P}_y$, we introduce a binary
variable $T^{rst}_{ijk}$ indicating that the three intervals $(ij,jk,ik)$ are labeled by $(r,s,t)$.

\newpage
\begin{proposition}[Triangle-based \ILP]
  \label{prop:triangle-ilp}
  Let $\Y=(y,\mu)$ and $\mathcal{P}_y$ be the respective two-partition set. 
  The following \ILP is feasible if and only if $\Y$ is realizable:
  \begin{align*}
    &P^r_{ij}\in\Binary
      &&\forall\, ij\in\NxxN,\ r\in[m'],\\
    &T^{rst}_{ijk}\in\Binary
      &&\forall\, ijk\in\NxxNxxN,\ (r,s,t)\in\mathcal{P}_y,\\[2pt]
    &\sum_{r=1}^{m'} P^r_{ij}=1
      &&\forall\, ij\in\NxxN,\\
    &\sum_{ij\in\NxxN} P^r_{ij}=\mu_r
      &&\forall\, r\in[m'],\\[2pt]
    &\sum_{(r,s,t)\in\mathcal{P}_y} T^{rst}_{ijk}=1
      &&\forall\, ijk\in\NxxNxxN,\\[2pt]
    &P^r_{ij}=\sum_{\substack{(r',s,t)\in\mathcal{P}_y\\ r'=r}} T^{r'st}_{ijk}
      &&\forall\, ijk\in\NxxNxxN,\ \forall r\in[m'],\\
    &P^s_{jk}=\sum_{\substack{(r,s',t)\in\mathcal{P}_y\\ s'=s}} T^{rs't}_{ijk}
      &&\forall\, ijk\in\NxxNxxN,\ \forall s\in[m'],\\
    &P^t_{ik}=\sum_{\substack{(r,s,t')\in\mathcal{P}_y\\ t'=t}} T^{rst'}_{ijk}
      &&\forall\, ijk\in\NxxNxxN,\ \forall t\in[m'].
  \end{align*}
\end{proposition}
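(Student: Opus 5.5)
The proof will reduce the statement to Lemma~\ref{lem:realizable-matching}, handling the two directions separately and passing between the one-hot encoding $P^r_{ij}$ and the map $P:\NxxN\to[m']$.

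\emph{Realizable $\Rightarrow$ feasible.} Let $\Y$ be realizable. Lemma~\ref{lem:realizable-matching} gives a multi-matching $P$ with multiplicity $\mu$ such that $\ruler^{(P)}$ is a monotone ruler with realization $x$.
\begin{itemize}
\item Set $P^r_{ij}=\mathbf{1}[P_{ij}=r]$; the row-sum and multiplicity constraints then hold immediately.
\item For each refinement $ijk$, let $r=P_{ij}$, $s=P_{jk}$, $t=P_{ik}$. The triangle equality gives $y_r+y_s=x_k-x_i=y_t$.
\item Show $(r,s,t)\in\mathcal{P}_y$. If $r=s$, the two distinct intervals $ij$ and $jk$ both carry label $r$, so $|P^{-1}(r)|\ge 2$ and hence $\mu_r\ge 2$.
\item Set $T^{rst}_{ijk}=1$ for this single triple and $0$ for all others. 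The sum-to-one constraint holds.
\item Each marginal constraint also holds. For example, $\sum_{s,t}T^{rst}_{ijk}$ equals $1$ exactly when $r=P_{ij}$, which is $P^r_{ij}$. The other two marginals are identical in form.
\end{itemize}

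\emph{Feasible $\Rightarrow$ realizable.} Take a feasible integral $(P,T)$ and define $P_{ij}$ as the unique $r$ with $P^r_{ij}=1$. The multiplicity constraints make this a multi-matching with multiplicity $\mu$.
\begin{itemize}
\item For each $ijk$, exactly one $T^{rst}_{ijk}$ equals $1$, with $(r,s,t)\in\mathcal{P}_y$.
\item The first marginal constraint at index $r$ gives $P^r_{ij}=1$, so $P_{ij}=r$. Likewise $P_{jk}=s$ and $P_{ik}=t$.
\item Since $(r,s,t)\in\mathcal{P}_y$, this yields $\ruler^{(P)}_{ij}+\ruler^{(P)}_{jk}=\ruler^{(P)}_{ik}$ for all $i<j<k$.
\end{itemize}

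The main obstacle is that Lemma~\ref{lem:ruler-triangle} needs triangle equalities for \emph{all} ordered triples in $\Nx^3$, including repeated and non-monotone ones. The plan is to derive these from the monotone case using the built-in skew-symmetry of $\ruler^{(P)}$ and $\ruler^{(P)}_{ii}=0$.
\begin{itemize}
\item \textbf{Repeated indices.} If two indices coincide, the identity is trivial. For instance, $\ruler_{ii}=\ruler_{ij}+\ruler_{ji}=0$.
\item \textbf{Distinct indices.} Let $a<b<c$ be the sorted values of the three indices, so that $\ruler_{ac}=\ruler_{ab}+\ruler_{bc}$. Each of the six orderings of $\{a,b,c\}$ as $(i,j,k)$ then follows by rearranging this equation and applying $\ruler_{uv}=-\ruler_{vu}$. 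For example, $(i,j,k)=(b,a,c)$ gives $\ruler_{bc}=\ruler_{ba}+\ruler_{ac}=-\ruler_{ab}+\ruler_{ab}+\ruler_{bc}$.
\end{itemize}
So $\ruler^{(P)}$ is a ruler. It is monotone because every $y_r>0$. Lemma~\ref{lem:realizable-matching} then shows $\Y$ is realizable.

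Alternatively, one can avoid the case check by defining $x_i=\sum_{\ell<i}y_{P_{\ell,\ell+1}}$ and proving $x_j-x_i=y_{P_{ij}}$ by induction on $j-i$ using the triples $(i,j-1,j)$.
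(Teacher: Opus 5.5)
Your proof is correct and follows the same route as the paper. In one direction you build the one-hot labeling and the single active $T^{rst}_{ijk}$ from a realization; in the other you read the labels off the unique active triangle variable and conclude via Lemma~\ref{lem:ruler-triangle} and Lemma~\ref{lem:realizable-matching}. You also make explicit two points the paper passes over: the check that $\mu_r\ge 2$ when $r=s$ (because the distinct intervals $ij$ and $jk$ share a label), and the extension of the triangle equality from $i<j<k$ to all ordered triples via skew-symmetry, which Lemma~\ref{lem:ruler-triangle} formally requires.
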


\begin{proof}
The first two constraint families enforce that $P$ is a multi-matching between intervals and distinct
distance values with multiplicities~$\mu$.
The remaining constraints enforce triangle consistency: for each refinement $i<j<k$, exactly one two-partition
$(r,s,t)\in\mathcal{P}_y$ is selected, and the induced labels on $(ij,jk,ik)$ agree with~$(r,s,t)$.

If $\Y$ is realizable, choose $x$ with $\Delta x=\Y$, and let $P$ be the induced interval labeling from
Lemma~\ref{lem:realizable-matching}. For each refinement $i<j<k$, the true distances satisfy
$(x_j-x_i)+(x_k-x_j)=x_k-x_i$, so if $x_j-x_i=y_r$, $x_k-x_j=y_s$, and $x_k-x_i=y_t$, then $(r,s,t)\in\mathcal{P}_y$.
Setting $T^{rst}_{ijk}=1$ for these triples (and $0$ otherwise) gives feasibility.

Conversely, suppose $(P,T)$ satisfies the \ILP. By construction, $P$ defines a multi-matching with the correct
multiplicities. The triangle constraints imply that for every refinement $i<j<k$, the three entries of the
induced difference matrix $\ruler^{(P)}$ satisfy the triangle equality. By Lemma~\ref{lem:ruler-triangle},
$\ruler^{(P)}$ is a ruler, hence corresponds to a point set $x$ with $\Delta x=\Y$.
\end{proof}

\subsection{Triangle-based \LP relaxation and interpretation}
\label{subsec:triangle-lp-system}

Relaxing integrality in Proposition~\ref{prop:triangle-ilp} yields a linear system.

\begin{system}[Triangle-based \LP relaxation]
  \label{sys:triangle-lp}
  The triangle-based \LP relaxation consists of variables
  $P^r_{ij}\in[0,1]$ for $ij\in\NxxN$ and $r\in[m']$, and variables
  $T^{rst}_{ijk}\in[0,1]$ for $ijk\in\NxxNxxN$ and $(r,s,t)\in\mathcal{P}_y$,
  together with the same linear equalities as in Proposition~\ref{prop:triangle-ilp}.
\end{system}

An integral solution $P$ to System~\ref{sys:triangle-lp} is feasible for the \ILP and therefore certifies realizability of~$\Y$.
A fractional solution may reflect either that $\Y$ is non-realizable or that the relaxation is inexact in this instance.

However, a feasible solution---fractional or otherwise---does induce a consistent \emph{ruler}, which is constructed as follows from the (possibly fractional) multi-matching $P$: set $\ruler^{(P)}_{ij}\coloneqq \sum_{r=1}^{m'} y_r P^r_{ij}$ for $ij\in\NxxN$ and then extend to $\ruler^{(P)}_{ii}=0$ and $\ruler^{(P)}_{ji}=-\ruler^{(P)}_{ij}$, which yields a ruler and hence a point set up to translation (Corollary~\ref{cor:lp-preserves-ruler}).
Thus, while the \LP relaxation is inexact in general, fractional solutions are structured, inducing valid ruler matrices that can be used in downstream regression tasks (e.g., as starting points for alternating schemes approximating \Beltway and \Turnpike~\cite{elder_beltway_turnpike_2024}).

\subsection{Structural consequences and formulation reductions}
\label{subsec:structural-reductions}

This subsection records several direct consequences of the triangle formulation. None introduces additional
coordinate variables.

\paragraph*{Fractional solutions induce consistent rulers.}
Let $(P,T)$ be feasible for System~\ref{sys:triangle-lp}. Define $\ruler^{(P)}\in\R^{\Nx\times\Nx}$ by setting,
for each $ij\in\NxxN$,
$\ruler^{(P)}_{ij}\coloneqq \sum_{r=1}^{m'} y_r P^r_{ij}$, with $\ruler^{(P)}_{ii}=0$ and
$\ruler^{(P)}_{ji}=-\ruler^{(P)}_{ij}$.

\begin{corollary}[Feasible \LP solutions induce a ruler]
\label{cor:lp-preserves-ruler}
If $(P,T)$ satisfies System~\ref{sys:triangle-lp}, then
$\ruler^{(P)}_{ik}=\ruler^{(P)}_{ij}+\ruler^{(P)}_{jk}$ for all $i<j<k$.
Consequently, there exists $x\in\R^n$, unique up to translation, such that
$\ruler^{(P)}_{ij}=x_j-x_i$ for all $i,j\in\Nx$.
\end{corollary}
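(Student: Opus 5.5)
The plan is to use linearity: the three coupling equalities of System~\ref{sys:triangle-lp} express each interval's fractional labeling as a marginal of the triangle variables, and every two-partition satisfies $y_r+y_s=y_t$ exactly. Fix a refinement $i<j<k$. By the coupling constraints,
\[
\ruler^{(P)}_{ij}=\sum_{r}y_rP^r_{ij}=\sum_{r}y_r\sum_{\substack{(r',s,t)\in\mathcal{P}_y\\ r'=r}}T^{r'st}_{ijk}=\sum_{(r,s,t)\in\mathcal{P}_y}y_rT^{rst}_{ijk},
\]
since each triple $(r,s,t)\in\mathcal{P}_y$ appears in exactly one inner sum, namely the one for its first index. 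In the same way, $\ruler^{(P)}_{jk}=\sum_{(r,s,t)\in\mathcal{P}_y}y_sT^{rst}_{ijk}$ and $\ruler^{(P)}_{ik}=\sum_{(r,s,t)\in\mathcal{P}_y}y_tT^{rst}_{ijk}$.

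Adding the first two expressions gives $\sum_{(r,s,t)\in\mathcal{P}_y}(y_r+y_s)T^{rst}_{ijk}$. Substituting $y_r+y_s=y_t$, which holds for every member of $\mathcal{P}_y$, turns this into $\ruler^{(P)}_{ik}$. This uses neither integrality nor the constraint $\sum T=1$.

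To pass from monotone triples to all triples, define $\ruler^{(P)}_{ii}=0$ and $\ruler^{(P)}_{ji}=-\ruler^{(P)}_{ij}$. The equality $\ruler_{ab}+\ruler_{bc}=\ruler_{ac}$ for an arbitrary ordered triple $a,b,c$ then follows by cases.
\begin{itemize}
\item If two indices coincide, the equality follows directly from skew-symmetry and the zero diagonal.
\item If the indices are distinct, relabel them in increasing order as $i<j<k$; the required identity is a rearrangement of $\ruler_{ij}+\ruler_{jk}=\ruler_{ik}$ combined with skew-symmetry. For example, for $(a,b,c)=(j,i,k)$ we need $\ruler_{ji}+\ruler_{ik}=\ruler_{jk}$, which is $-\ruler_{ij}+\ruler_{ik}=\ruler_{jk}$.
\end{itemize}

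Lemma~\ref{lem:ruler-triangle} then produces $x$ with $x_i=\ruler^{(P)}_{1i}$. This $x$ is unique up to translation: any two realizations have equal differences $x_j-x_1$, so they differ by a constant vector.

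The only delicate step is the reindexing of the double sum, where each triangle variable must be counted exactly once per coupling constraint. This holds because each constraint partitions $\mathcal{P}_y$ by one coordinate. The rest is routine.
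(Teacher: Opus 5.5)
Your proposal is correct and follows the paper's proof: write each entry of $\ruler^{(P)}$ as a marginal of the $T$-variables via the coupling constraints, substitute $y_r+y_s=y_t$ on $\mathcal{P}_y$, extend to all triples by skew-symmetry, and invoke Lemma~\ref{lem:ruler-triangle}. Your explicit case check for non-monotone triples and your uniqueness-up-to-translation argument spell out details that the paper's proof leaves implicit.
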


\begin{proof}
Fix $i<j<k$.
Using the marginalization constraints in System~\ref{sys:triangle-lp}, we have
\begin{align*}
  \ruler^{(P)}_{ik}
  &= \sum_{t=1}^{m'} y_t P^t_{ik}
   = \sum_{(r,s,t)\in\mathcal{P}_y} y_t\, T^{rst}_{ijk} \\
  &= \sum_{(r,s,t)\in\mathcal{P}_y} (y_r+y_s)\, T^{rst}_{ijk}
   = \sum_{(r,s,t)\in\mathcal{P}_y} y_r\, T^{rst}_{ijk}
     + \sum_{(r,s,t)\in\mathcal{P}_y} y_s\, T^{rst}_{ijk} \\
  &= \sum_{r=1}^{m'} y_r P^r_{ij} + \sum_{s=1}^{m'} y_s P^s_{jk}
   = \ruler^{(P)}_{ij}+\ruler^{(P)}_{jk}.
\end{align*}
By construction, $\ruler^{(P)}_{ii}=0$ and $\ruler^{(P)}$ is skew-symmetric, so the same identity extends to
all triples $i,j,k\in\Nx$.
Lemma~\ref{lem:ruler-triangle} then implies $\ruler^{(P)}$ is a ruler.
\end{proof}

\paragraph*{A triangle basis.}
The full family of refinement-indexed triangle constraints is redundant.

\begin{lemma}[Triangle-basis reduction]
\label{lem:triangle-basis}
Let $\ruler\in\R^{\Nx\times\Nx}$ satisfy $\ruler_{ii}=0$ and $\ruler_{ji}=-\ruler_{ij}$.
The following are equivalent:
\begin{enumerate}
  \item $\ruler_{ik}=\ruler_{ij}+\ruler_{jk}$ for all $i<j<k$,
  \item $\ruler_{1k}=\ruler_{1j}+\ruler_{jk}$ for all $1<j<k$,
  \item there exists $x\in\R^n$ such that $\ruler_{ij}=x_j-x_i$ for all $i,j\in\Nx$.
\end{enumerate}
\end{lemma}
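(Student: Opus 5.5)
I will prove the cycle of implications (1)$\Rightarrow$(2)$\Rightarrow$(3)$\Rightarrow$(1), throughout using the standing hypotheses $\ruler_{ii}=0$ and $\ruler_{ji}=-\ruler_{ij}$.

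The implication (1)$\Rightarrow$(2) is immediate: statement (2) is the special case $i=1$ of statement (1), since $1<j<k$ is a monotone refinement.

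For (3)$\Rightarrow$(1), I take $x$ as given and expand each entry as a difference of coordinates: $\ruler_{ij}+\ruler_{jk}=(x_j-x_i)+(x_k-x_j)=x_k-x_i=\ruler_{ik}$. This is exactly the forward direction of Lemma~\ref{lem:ruler-triangle}.

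The only substantive step is (2)$\Rightarrow$(3). I will construct the realization explicitly by setting $x_i\coloneqq\ruler_{1i}$, so that $x_1=\ruler_{11}=0$. I then verify $\ruler_{ij}=x_j-x_i$ for all $i,j\in\Nx$ by cases.
\begin{enumerate}
  \item If $i=j$, both sides vanish.
  \item If $i=1$, the identity reads $\ruler_{1j}=x_j-0$, which holds by definition.
  \item If $j=1$, skew-symmetry gives $\ruler_{i1}=-\ruler_{1i}=x_1-x_i$.
  \item If $1<i<j$, hypothesis (2) applied to the pair $(i,j)$ gives $\ruler_{1j}=\ruler_{1i}+\ruler_{ij}$, hence $\ruler_{ij}=x_j-x_i$.
  \item If $1<j<i$, the previous case gives $\ruler_{ji}=x_i-x_j$, and skew-symmetry yields $\ruler_{ij}=x_j-x_i$.
\end{enumerate}
This covers every pair $(i,j)$, so $x$ realizes $\ruler$.

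There is no real obstacle here. The only care needed is to cover all orderings of $i,j$ relative to $1$ and to each other, and to notice that the standing hypotheses $\ruler_{ii}=0$ and skew-symmetry are exactly what let the $O(n^2)$ basis triangles through index $1$ replace the $O(n^3)$ refinement triangles. I would close with a remark on the consequence used in the main results: the $T$ variables need only be indexed by refinements $(1,j,k)$, which reduces the triangle variables to $O(n^2|\mathcal{P}_y|)$.
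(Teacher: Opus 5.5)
Your proof is correct and follows the same route as the paper: the cycle (1)$\Rightarrow$(2)$\Rightarrow$(3)$\Rightarrow$(1), with the explicit realization $x_k\coloneqq\ruler_{1k}$ and $x_1=0$. Your case split is more careful than the paper's, which treats only $i<j$ via the triple $(1,i,j)$ and leaves implicit the cases $i=1$, $i=j$, and $i>j$ (handled by the normalization $\ruler_{ii}=0$ and skew-symmetry).
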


\begin{proof}
(1) $\Rightarrow$ (2) is immediate.
For (2) $\Rightarrow$ (3), set $x_1=0$ and $x_k\coloneqq \ruler_{1k}$ for $k>1$.
Then for $i<j$, applying (2) to the triple $(1,i,j)$ gives $\ruler_{1j}=\ruler_{1i}+\ruler_{ij}$, hence
$\ruler_{ij}=x_j-x_i$.
Finally, (3) $\Rightarrow$ (1) follows by substitution.
\end{proof}

Lemma~\ref{lem:triangle-basis} implies that the triangle-consistency constraints in
Proposition~\ref{prop:triangle-ilp} and System~\ref{sys:triangle-lp} may be imposed only for refinements of the
form $(1,j,k)$ with $1<j<k$.
This reduces the number of refinement-indexed constraint blocks from $\binom{n}{3}$ to $\binom{n-1}{2}$ and
the number of triangle variables accordingly.

\paragraph*{Containment-based pruning.}
Define a containment order on intervals by setting $ij\preceq st$ if $s\le i<j\le t$.
Under strict containment $ij\prec st$, every strictly increasing realization satisfies
$x_j-x_i < x_t-x_s$.

Let $m=\binom{n}{2}$ and let $d_1\ge\cdots\ge d_m$ be the multiset $\Delta x$ written as a nonincreasing list
(including multiplicities).
For an interval $ij\in\NxxN$, define $\mathrm{rank}_x(ij)\in[m]$ as any index such that
$d_{\mathrm{rank}_x(ij)}=x_j-x_i$.

\begin{lemma}[Containment-based rank bounds]
\label{lem:containment-rank}
For every strictly increasing $x\in\R^n$ and every $ij\in\NxxN$,
\( i(n-j+1)\le \mathrm{rank}_x(ij)\le m-\binom{j-i+1}{2}+1\).
\end{lemma}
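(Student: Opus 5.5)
The plan is a counting argument based on the containment order. Fix a strictly increasing $x$ and an interval $ij\in\NxxN$. Since $d_1\ge\cdots\ge d_m$ lists $\Delta x$ in nonincreasing order, we bound $\mathrm{rank}_x(ij)$ by counting intervals whose lengths are forced to be at least, respectively at most, $x_j-x_i$. The forcing comes from the strict-containment observation stated just before the lemma: if $ij\prec st$ then $x_j-x_i<x_t-x_s$, and by the same reasoning $ij\succ st$ gives $x_t-x_s<x_j-x_i$.

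\textbf{Lower bound.} First count the intervals $st$ with $s\le i<j\le t$, i.e.\ those with $ij\preceq st$. The number of choices is $i$ for $s$ and $n-j+1$ for $t$, so there are $i(n-j+1)$ of them, including $ij$ itself. Each of the $i(n-j+1)-1$ intervals strictly containing $ij$ has length strictly greater than $x_j-x_i$. Their values therefore occupy positions strictly before every position in the nonincreasing list that holds the value $x_j-x_i$. Hence any index $\ell$ with $d_\ell=x_j-x_i$ satisfies $\ell\ge i(n-j+1)$.

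\textbf{Upper bound.} Now count the intervals contained in $ij$, namely the pairs $s<t$ with $i\le s<t\le j$. There are $\binom{j-i+1}{2}$ of them, including $ij$. The $\binom{j-i+1}{2}-1$ proper subintervals all have length strictly less than $x_j-x_i$, so they occupy positions strictly after any occurrence of $x_j-x_i$. It follows that $\ell\le m-\bigl(\binom{j-i+1}{2}-1\bigr)=m-\binom{j-i+1}{2}+1$.

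\textbf{Main obstacle.} The subtle point is the phrase ``any index'' in the definition of $\mathrm{rank}_x$ when there are ties. We must show the bounds hold for every index $\ell$ with $d_\ell=x_j-x_i$, not just for some suitably chosen one. This follows because strict inequality places each larger value at a position before every occurrence of $x_j-x_i$, and each smaller value at a position after every occurrence. No tie-breaking argument is needed.
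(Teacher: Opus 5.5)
Your proof is correct and is essentially the paper's argument. You count the $i(n-j+1)-1$ strict superintervals (each strictly longer, so placed strictly earlier in the nonincreasing list) and the $\binom{j-i+1}{2}-1$ strict subintervals (each strictly shorter, so placed strictly later), and your explicit check that the bounds hold for every tied index $\ell$ with $d_\ell=x_j-x_i$ makes precise a point the paper leaves implicit.
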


\begin{proof}
There are exactly $i(n-j+1)-1$ strict superintervals of $ij$ (choose $s\in\{1,\dots,i\}$ and
$t\in\{j,\dots,n\}$), each with strictly larger distance, so
$\mathrm{rank}_x(ij)\ge i(n-j+1)$.
There are exactly $\binom{j-i+1}{2}-1$ strict subintervals of $ij$ (intervals strictly contained in
$\{i,\dots,j\}$), each with strictly smaller distance, so at least that many entries appear below $x_j-x_i$ in
the sorted list; hence $\mathrm{rank}_x(ij)\le m-\binom{j-i+1}{2}+1$.
\end{proof}

Write $M_r\coloneqq \sum_{s=1}^r \mu_s$ with $M_0=0$.
In the sorted expansion of $\Y=(y,\mu)$, the value $y_r$ occupies the rank block $[M_{r-1}+1,M_r]$.
Therefore, an interval $ij$ can be assigned label $r$ in any realizable instance only if this block
intersects the admissible rank range from Lemma~\ref{lem:containment-rank}.
Equivalently, we may fix $P^r_{ij}=0$ whenever \(M_r<i(n-j+1)\) or
\(M_{r-1}+1>m-\binom{j-i+1}{2}+1\).
This pruning depends only on $(n,\mu)$ and can be applied as a presolve step in
Proposition~\ref{prop:MILP-turnpike}, Proposition~\ref{prop:triangle-ilp}, and System~\ref{sys:triangle-lp}.

\begin{algorithm}
  \caption{Assignment-first pipeline from unlabeled distances}
  \label{alg:pipeline}
  \begin{algorithmic}[1]
    \REQUIRE Multiset $\Y=(y,\mu)$ with distinct values $y_1>\cdots>y_{m'}>0$.
    \ENSURE Assignment matrix $\widehat P$ (integral if certified) and optional coordinate estimate $\widehat x$.
    \STATE Enumerate $\mathcal{P}_y$ in $O(m'^2)$ time (Section~\ref{subsec:two-partitions}).
    \STATE Solve the triangle \ILP (Proposition~\ref{prop:triangle-ilp}) or its \LP relaxation (System~\ref{sys:triangle-lp}).
    \STATE If an integral assignment $\widehat P$ is obtained, certify realizability and proceed; otherwise treat $\widehat P$ as a relaxation/diagnostic.
    \STATE Optionally extract coordinates $\widehat x$ from the induced ruler (Corollary~\ref{cor:lp-preserves-ruler}) or by downstream regression (Section~\ref{subsec:metrics}).
  \end{algorithmic}
\end{algorithm}

\subsection{Two-partition recovery under noise and rounding}
\label{subsec:two-partition-recovery}

We now return to the bounded-error model from Section~\ref{sec:noisy-variants} to show that, under a simple separation condition, the two-partition structure of a noiseless instance can be recovered from a noisy observation of that instance by a nearest-neighbor rounding procedure.

Let $\Y=(y,\mu)$ be the ground-truth multiset with distinct values $y_1>\cdots>y_{m'}>0$.
For each target index $t\in[m']$, define the \emph{gap at $t$} by
\begin{equation}\label{eq:gap-star}
  \mathrm{gap}_t \coloneqq \min_{\substack{r,s\in[m']\\ y_r+y_s\neq y_t}} |y_r+y_s-y_t|,
  \qquad
  \mathrm{gap}_\star \coloneqq \min_{t\in[m']}\mathrm{gap}_t.
\end{equation}
with the convention $\mathrm{gap}_t=+\infty$ when no invalid pair exists.

We observe noisy distances $\Ytilde$ and form rounded representatives by applying $\operatorname{round}_R$ to each observed value, yielding $\Yhat=(\hat y,\hat\mu)$.
For each $y_k$, the corresponding rounded representative $\hat y_k$ satisfies \(
    \left|\hat y_k-y_k\right| \le r+R.
\)
For use below, the \emph{approximate} two-partition test for threshold $\tau > 0$ is \(
  \left|\hat y_r+\hat y_s-\hat y_t\right| \le \tau.
\)

\begin{theorem}[Two-partition recovery under bounded noise]
  \label{thm:two-partition-recovery}
  Suppose $\mathrm{gap}_\star$---as defined in~\eqref{eq:gap-star}---is such that the noise and rounding radii satisfy $6(r+R) < \mathrm{gap}_\star$.
  With $\tau=\mathrm{gap}_\star/2$, we have for every triple $(r,s,t) \in[m']^{3}$ that \(  
    y_r+y_s=y_t
    \quad\Longleftrightarrow\quad
    |\hat y_r+\hat y_s-\hat y_t|\le \tau.
  \)
  In particular, the exact two-partition set $\mathcal{P}_y$ can be recovered from the rounded noisy representatives~$\hat y$.
\end{theorem}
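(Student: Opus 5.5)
The argument is a triangle-inequality perturbation bound: compare the rounded sum $\hat y_r+\hat y_s-\hat y_t$ with the true sum $y_r+y_s-y_t$. Fix any triple $(r,s,t)\in[m']^3$ and set $e_k\coloneqq \hat y_k-y_k$ for $k\in\{r,s,t\}$. By the standing assumption stated just before the theorem, $|e_k|\le r+R$. Then
\[
  \hat y_r+\hat y_s-\hat y_t=(y_r+y_s-y_t)+(e_r+e_s-e_t),
  \qquad |e_r+e_s-e_t|\le 3(r+R).
\]
Write $\varepsilon\coloneqq 3(r+R)$. The hypothesis $6(r+R)<\mathrm{gap}_\star$ reads $2\varepsilon<\mathrm{gap}_\star$, which gives $\varepsilon<\tau=\mathrm{gap}_\star/2$.

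For the forward direction, assume $y_r+y_s=y_t$. Then $|\hat y_r+\hat y_s-\hat y_t|\le\varepsilon<\tau$, so the approximate test accepts.

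For the converse, I argue by contraposition. Assume $y_r+y_s\ne y_t$. The pair $(r,s)$ is then an invalid pair for target $t$ in the minimum defining $\mathrm{gap}_t$ in~\eqref{eq:gap-star}. Hence $|y_r+y_s-y_t|\ge\mathrm{gap}_t\ge\mathrm{gap}_\star$. The reverse triangle inequality gives
\[
  |\hat y_r+\hat y_s-\hat y_t|\ge \mathrm{gap}_\star-\varepsilon>\mathrm{gap}_\star-\tau=\tau,
\]
so the test rejects. The case $\mathrm{gap}_\star=+\infty$ needs a separate remark. Then no invalid triple exists, so the converse is vacuous, and $\tau=+\infty$ makes the forward direction trivial.

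The ``in particular'' clause needs two further points. First, $\mathcal{P}_y$ carries the side condition $\mu_r\ge2$ when $r=s$, so multiplicities must also be recovered. Under the separation hypothesis, distinct true values stay distinct after noise and rounding. The reason is that $|y_a-y_b|=|y_a+y_{m'}-y_b-y_{m'}|$, and I would bound differences of distinct values from below by $\mathrm{gap}_\star$ through an appropriate choice of pair. Since $2(r+R)<\mathrm{gap}_\star$, no two classes merge, and $\hat\mu=\mu$ follows. Second, the indices of $\hat y$ must correspond to those of $y$, and this same non-merging argument justifies that correspondence.

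The main obstacle is exactly this bookkeeping that identifies $\hat y_k$ with $y_k$, including the multiplicities. The equivalence itself is a three-line perturbation bound. For the separation step I would first try the following. Suppose $y_a\neq y_b$, and assume without loss of generality $y_a<y_b$. Use the triple (any $r$, $s$ with $y_r+y_s$ equal to $y_a$ plus something, target $b$)... If that gets messy, I would instead state the recovery under the theorem's implicit assumption that the rounded representatives are indexed consistently with the true values and that $\mu$ is known.
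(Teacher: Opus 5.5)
Your perturbation argument for the equivalence is exactly the paper's proof: $|\hat y_k-y_k|\le r+R$ bounds the error term by $3(r+R)<\tau$ on true triples, and $|y_r+y_s-y_t|\ge\mathrm{gap}_\star$ with the reverse triangle inequality gives $>\tau$ otherwise; the paper likewise treats the ``in particular'' clause as immediate from the setup's one-representative-per-$y_k$ indexing, which is your fallback. If you pursue the extra bookkeeping, note three things: $|y_a-y_b|\ge\mathrm{gap}_\star$ needs realizability (e.g.\ $y=(10,9.99)$ has $\mathrm{gap}_\star=9.98$); it comes from a true triple in which one of the two values is a summand, say $y_a+y_c=y_t$, via $|y_b-y_a|=|y_b+y_c-y_t|$, rather than from $y_{m'}$; and non-merging alone does not give $\hat\mu=\mu$, since copies of a single value can round to different grid points.
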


\begin{proof}
If $y_r+y_s=y_t$, write $\hat y_k=y_k+\varepsilon_k+\delta_k$ with $|\varepsilon_k|\le r$ (measurement error) and $|\delta_k|\le R$ (rounding error). Then $\hat y_r+\hat y_s-\hat y_t=(\varepsilon_r+\varepsilon_s-\varepsilon_t)+(\delta_r+\delta_s-\delta_t)$, which means $|\hat y_r+\hat y_s-\hat y_t|\le 3(r+R)<\tfrac12\mathrm{gap}_\star=\tau$.

If $y_r+y_s\neq y_t$, then $|y_r+y_s-y_t|\ge \mathrm{gap}_\star$ by definition. The same decomposition gives $\hat y_r+\hat y_s-\hat y_t=(y_r+y_s-y_t)+(\varepsilon_r+\varepsilon_s-\varepsilon_t)+(\delta_r+\delta_s-\delta_t)$, and the reverse triangle inequality yields
$|\hat y_r+\hat y_s-\hat y_t|
\ge \mathrm{gap}_\star-3(r+R)
> \tfrac12\mathrm{gap}_\star=\tau$.
\end{proof}

The condition in Theorem~\ref{thm:two-partition-recovery} is deterministic and worst-case: it is phrased in terms of a global separation parameter $\mathrm{gap}_\star$ and a noise bound~$r$.
Its role here is to identify a regime in which the combinatorial input $\mathcal{P}_y$ to the triangle-based \ILP/\LP\ can be recovered exactly from noisy measurements after rounding.
Outside this regime, the recovered two-partition set is not guaranteed to contain the ground truth as a subset; while the formulation remains well-defined in such cases, the output should be interpreted as a triangle-equality--consistent diagnostic/heuristic rather than as a realizability certificate.

\section{Experiments}
\label{sec:experiments}

The experiments below are intended to (i) illustrate empirical integrality behavior of the \LP relaxation on computationally-feasible problem sizes and (ii) demonstrate recovery accuracy as measured by assignment-first metrics enabled by the \ILP/\LP output.

\subsection{Setup}
\label{subsec:exp-setup}

We evaluate the triangle-based \ILP (Proposition~\ref{prop:triangle-ilp}) and the triangle \LP relaxation
(System~\ref{sys:triangle-lp}) on noisy instances under the $(r,R)$ model and small
partial-digest benchmarks subject to bounded-error magnitudes matching those observed in practice~\cite{alizadeh_digestion_protocol_1995}.

All experiments use moderate sizes: synthetic instances use up to $n=30$ (i.e., $m=\binom{30}{2}=435$ distances), and partial-digest benchmarks use up to 100 fragments.

Given an instance $\Y=(y,\mu)$, the \ILP/\LP layer returns an assignment matrix $\Phat$, i.e., a
multi-matching of interval indices $ij$ to distance labels $r$.
Whenever $\widehat P$ is integral, we also compute a coordinate estimate $\widehat x$ by least-squares regression on the induced ruler (Section~\ref{subsec:metrics}), but only as a secondary sanity check since the primary output is the assignment matrix.

Experiments are feasibility problems. 
When using the \LP relaxation, we solve System~\ref{sys:triangle-lp} and record whether the returned solution is integral.
When using the \ILP, we solve Proposition~\ref{prop:triangle-ilp} and record feasibility and recovery metrics.

\paragraph*{Implementation.}
All \LP/\ILP instances were modeled and solved in \texttt{gurobipy} (Gurobi Optimizer 13.0 Python API) using default settings unless otherwise noted.
Experiments were run on \texttt{Apple M4 Pro / 24~GB RAM / macOS Sequoia 15.6.1}.

\subsection{Metrics}
\label{subsec:metrics}

Because the assignment layer returns a discrete labeling, we evaluate recovery primarily using permutation and labeling metrics on $\widehat P$. 
Coordinate errors for $\widehat x$ are treated as secondary.

\paragraph*{Interval labeling error.}
Let $P^\star$ denote a ground-truth assignment for a realizable instance. 
The labeling error is \[
  \mathrm{err}_{\mathrm{lab}}(\widehat P,P^\star)
  \coloneqq \frac{1}{|\NxxN|}\sum_{ij\in\NxxN}\mathbf{1}\!\left[\widehat P_{ij}\neq P^\star_{ij}\right].
\]

\paragraph*{Permutation distance.}
Let $\pi^\star$ be the ground-truth permutation mapping unlabeled distances to interval indices, and let $\widehat\pi$ be the permutation induced by $\Phat$.
We measure the difference between each $\widehat\pi$ and $\pi^\star$ through the normalized Kendall--$\tau$ distance, which we denote as \(
    \mathrm{dist}_{\mathrm{perm}}(\widehat\pi,\pi^\star) \in [0,1].
\)

\paragraph*{Coordinate error.}
For each $\Phat$, we compute the associated $\widehat x$ with least-squares regression as \[
  \widehat x
  \in \arg\min_{x\in\R^n} \sum_{ij\in\NxxN}\Bigl(x_j-x_i-\ruler^{(\widehat P)}_{ij}\Bigr)^2,
\] with translation fixed as $x_1 = 0$.
We then report the mean absolute error (MAE), \[
  \mathrm{MAE}(\widehat x,x^\star)\coloneqq \frac{1}{n}\sum_{i=1}^n \bigl|\widehat x_i-x^\star_i\bigr|,
\] after aligning $\widehat x$ to $x^\star$ up to reflection (i.e., choosing the better of $\widehat x$ and its reflected copy under this metric).

\paragraph*{LP integrality score.}
For the \LP relaxation, we quantify integrality via \[
  \mathrm{int}(\widehat P)
  \coloneqq \frac{1}{|\NxxN|}\sum_{ij\in\NxxN}\max_{r\in[m']} \widehat P^r_{ij},
\] which equals $1$ if and only if $\Phat$ is an integral multi-matching.

\subsection{Synthetic exact instances}
\label{subsec:synthetic-exact}

We generate exact instances by sampling i.i.d.\ coordinates $z_1,\dots,z_n$ from three distributions (uniform on $[0,1]$, standard normal, and Cauchy) and sorting to obtain $x^\star$; for linear instances we form $\Y=\Delta x^\star$, and for circular instances we use the corresponding \Beltway multiset of shortest-arc distances.
Figure~\ref{fig:synthetic-exact} summarizes the integrality score of the triangle \LP relaxation as a function of $n$ for (left) linear \Turnpike instances and (right) circular \Beltway instances.

\begin{figure}[!htb]
  \centering
  \includegraphics[width=0.95\linewidth]{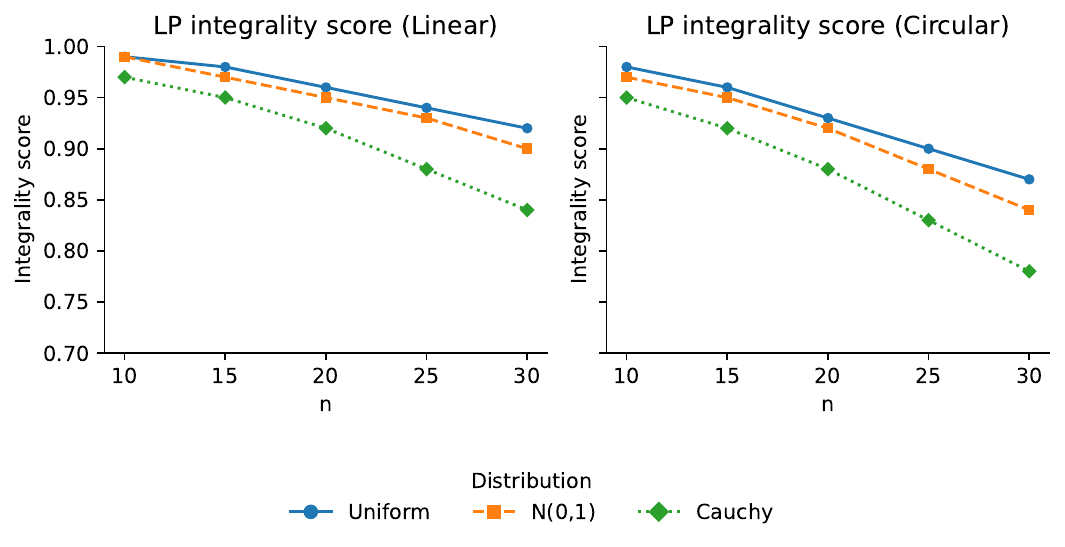}
  \caption{Synthetic exact instances. 
  LP integrality score on (left) linear \Turnpike instances and (right) circular \Beltway instances.
  Curves are stratified by the distribution of $x^\star$.}
  \label{fig:synthetic-exact}
\end{figure}

\subsection{Phase transitions under the \texorpdfstring{$\rR$}{(r,R)} model}
\label{subsec:phase-transition}

We now test the deterministic regime established in Theorem~\ref{thm:two-partition-recovery}.
For a fixed family of exact instances, we add bounded noise of radius $r$ to the distances and then round to a
grid of spacing $R$, yielding $\Yhat=(\hat y,\hat\mu)$ (here hats denote rounded input values; $\Phat$ denotes the solver output).
From $\Yhat$ we enumerate an \emph{approximate} two-partition set via the tolerance test $|\hat y_r+\hat y_s-\hat y_t|\le \tau$ (Theorem~\ref{thm:two-partition-recovery}); we then solve the triangle \ILP/\LP using this approximate $\mathcal{P}_{\hat y}$.

Figure~\ref{fig:noisy-phase} reports three heatmaps over a grid of $(r,R)$ values: the fraction of instances whose exact two-partition set is recovered (top), the rate of spurious two-partitions (middle), and the rate of missing true two-partitions (bottom).
The dashed line indicates the sufficient deterministic condition of Theorem~\ref{thm:two-partition-recovery}.

\begin{figure}[!hb]
  \centering
  \begin{minipage}[t]{0.44\linewidth}
    \centering
    \includegraphics[width=\linewidth]{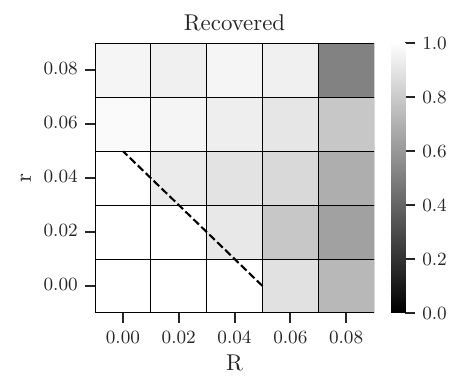}
  \end{minipage}

  \medskip

  \begin{minipage}[t]{0.44\linewidth}
    \centering
    \includegraphics[width=\linewidth]{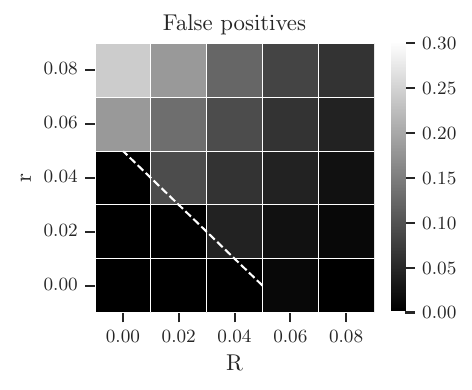}
  \end{minipage}\hfill
  \begin{minipage}[t]{0.44\linewidth}
    \centering
    \includegraphics[width=\linewidth]{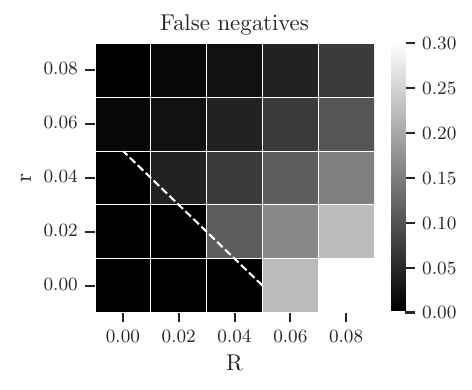}
  \end{minipage}
  \caption{Noisy $(r,R)$ phase diagram for two-partition recovery. Top: recovery rate; bottom-left: false-positive rate (spurious two-partitions); bottom-right: false-negative rate (missing true two-partitions). The dashed line is the sufficient condition of Theorem~\ref{thm:two-partition-recovery}.}
  \label{fig:noisy-phase}
\end{figure}

\FloatBarrier

Takeaway: the recovered two-partition structure is stable in a region consistent with the deterministic bound in Theorem~\ref{thm:two-partition-recovery}.
As $r$ and $R$ increase, $\mathcal{P}_{\hat y}$ accrues false positives and false negatives, so the triangle formulation is fed a distorted two-partition input outside the provable regime.

\subsection{Comparison with regression-based baselines}
\label{subsec:baseline-comparison}

We compare our formulation with regression-based pipelines using baselines established in previous work~\cite{elder_beltway_turnpike_2024}. 
The goal is not to optimize a coordinate loss directly, but to highlight that the triangle \ILP/\LP returns an assignment matrix $\widehat P$ that can be evaluated without committing to a particular numeric norm or optimizer.

Figure~\ref{fig:comparison-baselines} reports median interval labeling error, permutation distance, and coordinate MAE as a function of $n$ for (i)~the triangle-equality \ILP, (ii)~the triangle-equality \LP (rounded to a hard assignment by argmax), (iii)~a majorization--minimization baseline (MM), (iv)~a gradient-descent baseline (GD), and (v)~a sorting-network extended-formulation \ILP baseline (Ext-\ILP). Here Ext-\ILP denotes the implementation used by Elder et al.~\cite{elder_beltway_turnpike_2024}, which is equivalent to the coordinate-coupled \MILP of Proposition~\ref{prop:MILP-turnpike} and is based on Goemans's compact extended formulation for the permutahedron~\cite{goemans_compact_permutahedron_2015}.

\begin{figure}[!htb]
  \centering
\includegraphics[width=0.5\linewidth]{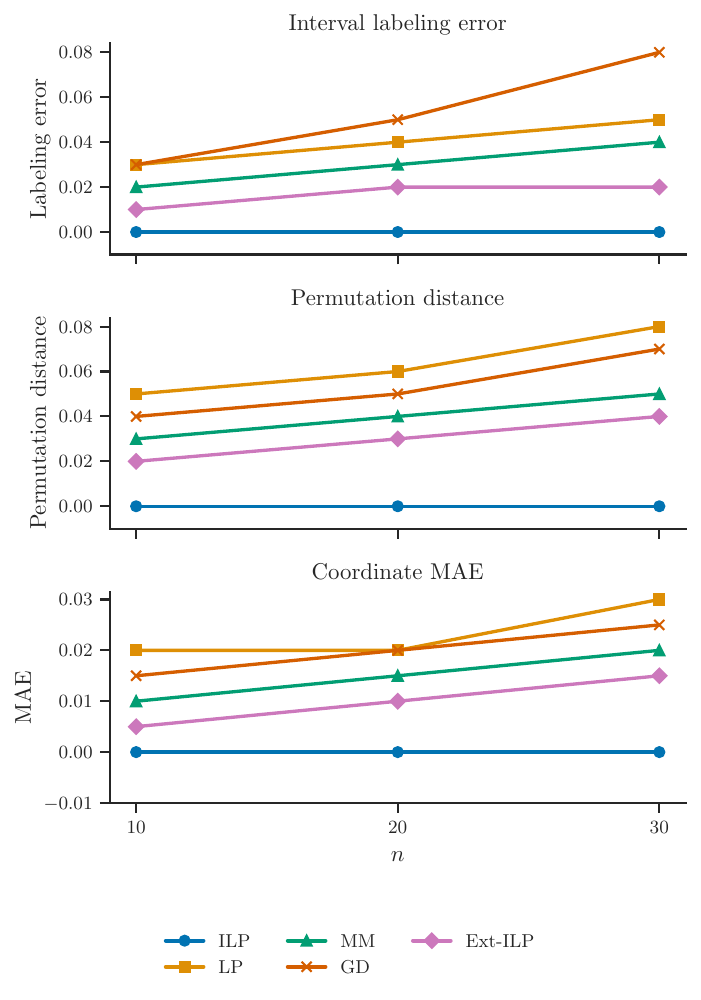}
  \caption{Comparison with regression-based baselines. Panels show median labeling error (top), permutation distance (middle), and coordinate MAE (bottom) for the triangle \ILP/\LP, MM, GD, and the sorting-network Ext-\ILP baseline.}
  \label{fig:comparison-baselines}
\end{figure}

Takeaway: our \ILP formulation directly targets the combinatorial assignment layer, yielding lower labeling and permutation errors across all sizes.

\FloatBarrier

\subsection{Partial digest instances (linear and circular)}
\label{subsec:partial-digest}

We sample partial-digest instances over linear and circular genomes (the
\Beltway variant), observing each up to bounded error with $r = 5$ (i.e., up to 5 additional or missing bases on each fragment), matching error magnitudes observed in practice~\cite{alizadeh_digestion_protocol_1995}.
Figure~\ref{fig:partial-digest} reports normalized labeling error and fragment recovery rate as a function of the number of fragments for linear and circular instances.
Takeaway: for these practical instances, recovery metrics improve as the number of fragments increases while the error radius remains fixed at $r = 5$.

\begin{figure}[!htb]
  \centering
  \includegraphics[width=0.75\linewidth]{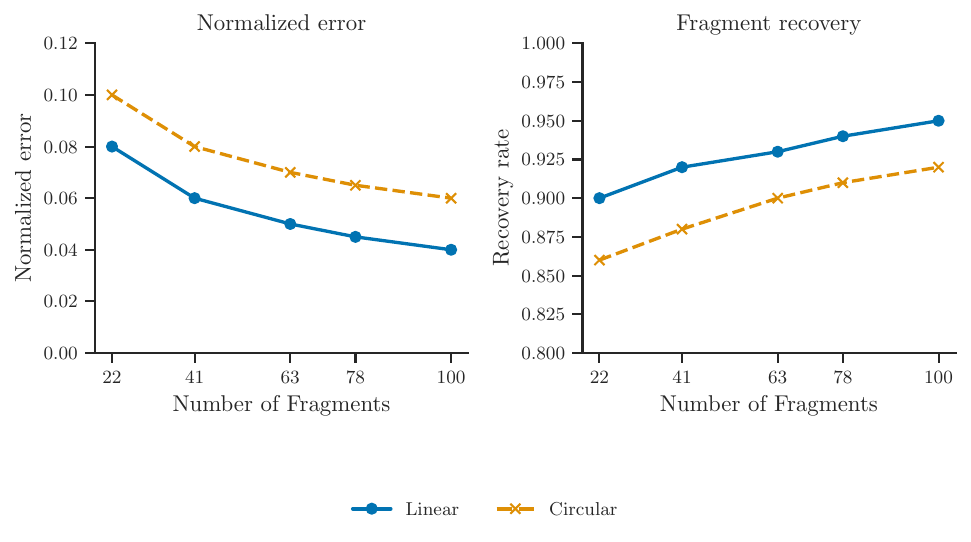}
  \caption{Partial digest experiments on linear and circular genomes. Normalized interval-labeling error (left) and fragment recovery rate (right).}
  \label{fig:partial-digest}
\end{figure}

\FloatBarrier
\section{Conclusion}
\label{sec:conclusion}

Unlabeled-distance reconstruction problems are limited less by arithmetic precision than combinatorial ambiguity: before any regression, one must decide which observed distances correspond to which intervals.
We formulate exact \Turnpike as a triangle-equality assignment problem, yielding a purely combinatorial \ILP and an \LP relaxation parameterized by the two-partition set $\mathcal{P}_y$.
When the relaxation is integral, it provides an explicit certificate and returns an assignment matrix that can be passed to any downstream regression routine.

For uncertain measurements, we prove that bounded noise plus rounding preserve $\mathcal{P}_y$ under a simple separation condition, so the same triangle formulation applies.
Empirically, recovery is strong within this regime and degrades quickly outside it, where the recovered two-partition set can change and the \LP output should be interpreted as a triangle-equality--consistent diagnostic rather than a realizability certificate.

\paragraph*{Future directions}
We see three natural directions: \emph{integrality theory} (conditions guaranteeing integrality of the triangle \LP versus fractional non-integrality); \emph{scalable enforcement} (exploit sparsity in $\mathcal{P}_y$ and separation-based schemes to enforce triangle equalities at larger scales); and \emph{beyond worst-case robustness} (probabilistic analyses, adaptive rounding, and extensions to missing/duplicated distances in applications such as partial digests).

\bibliography{sections/turnpike}

@article{alizadeh_digestion_protocol_1995,
  author  = {Alizadeh, F. and Karp, R. M. and Weisser, D. K. and others},
  title   = {Physical mapping of chromosomes using unique probes},
  journal = {J. Comput. Biol.},
  volume  = {2},
  number  = {2},
  year    = {1995},
  pages   = {159--184},
}

@article{blazewicz_SPDP_hardness_2009,
  author  = {Blazewicz, J. and Burke, E. K. and Kasprzak, M. and Kovalev, A. and Kovalyov, M. Y.},
  title   = {On the approximability of the Simplified Partial Digest Problem},
  journal = {Discrete Appl. Math.},
  volume  = {157},
  number  = {17},
  year    = {2009},
  pages   = {3586--3592},
}

@inproceedings{chen_pairwise_reconstruction_2009,
  author    = {Chen, Shiteng and Huang, Zhiyi and Kannan, Sampath},
  title     = {Reconstructing numbers from pairwise function values},
  booktitle = {ISAAC},
  year      = {2009},
  pages     = {142--152},
}

@article{cieliebak_UPDP_hardness_2005,
  author  = {Cieliebak, M. and Eidenbenz, S. and Penna, P.},
  title   = {Partial digest is hard to solve for erroneous input data},
  journal = {Theor. Comput. Sci.},
  volume  = {349},
  number  = {3},
  year    = {2005},
  pages   = {361--381},
}

@phdthesis{dakic_turnpike_2000,
  author = {Dakic, Tamara},
  title  = {On the Turnpike Problem},
  school = {Simon Fraser University},
  year   = {2000},
}

@article{elder_beltway_turnpike_2024,
  author  = {Elder, C. S. and Hoang, Q. M. and Ferdosi, M. and Kingsford, C.},
  title   = {Approximate and exact optimization algorithms for the Beltway and Turnpike problems with duplicated, missing, partially labeled, and uncertain measurements},
  journal = {J. Comput. Biol.},
  volume  = {31},
  number  = {10},
  year    = {2024},
  pages   = {908--926},
}

@article{fomin_cyclopeptide_circular_sums_2015,
  author  = {Fomin, Eduard},
  title   = {Reconstruction of sequence from its circular partial sums for the cyclopeptide sequencing problem},
  journal = {J. Bioinf. Comput. Biol.},
  volume  = {13},
  number  = {1},
  year    = {2015},
}

@article{gabrys_polymer_ECCs_turnpike_2020,
  author  = {Gabrys, R. and Pattabiraman, S. and Milenkovic, O.},
  title   = {Mass error-correction codes for polymer-based data storage},
  journal = {Proc. ISIT},
  year    = {2020},
  pages   = {25--30},
}

@article{goemans_compact_permutahedron_2015,
  author  = {Goemans, M. X.},
  title   = {Smallest compact formulation for the permutahedron},
  journal = {Math. Program.},
  volume  = {153},
  number  = {1},
  year    = {2015},
  pages   = {5--11},
}

@article{Huang_ReconstructingPointSetsDistributions_2021,
  author  = {Huang, Shuai and Dokmani{\'c}, Ivan},
  title   = {Reconstructing Point Sets from Distance Distributions},
  journal = {IEEE Trans. Signal Process.},
  volume  = {69},
  year    = {2021},
  pages   = {1811--1827},
}

@inproceedings{jaganathan_pairwise_distances_2013,
  author    = {Jaganathan, Kishore and Hassibi, Babak},
  title     = {Reconstruction of integers from pairwise distances},
  booktitle = {ICASSP},
  year      = {2013},
  pages     = {5974--5978},
}

@article{nadimi_fast_PDP_2011,
  author  = {Nadimi, Reza and Fathabadi, Hassan Salehi and Ganjtabesh, Mohammad},
  title   = {A fast algorithm for the partial digest problem},
  journal = {Japan J. Indust. Appl. Math.},
  volume  = {28},
  number  = {2},
  year    = {2011},
  pages   = {315--325},
}

@article{skiena_distances_1990,
  author  = {Skiena, S. S. and Smith, W. D. and Lemke, P.},
  title   = {Reconstructing sets from interpoint distances},
  journal = {Proc. 6th ACM SoCG},
  year    = {1990},
  pages   = {332--339},
}

@article{skiena_PDP_1994,
  author  = {Skiena, S. S. and Sundaram, G.},
  title   = {A partial digest approach to restriction site mapping},
  journal = {Bull. Math. Biol.},
  volume  = {56},
  number  = {2},
  year    = {1994},
  pages   = {275--294},
}

@article{wang_fast_SPDP_2023,
  author  = {Wang, Biing{-}Feng},
  title   = {Fast Algorithms for the Simplified Partial Digest Problem},
  journal = {J. Comput. Biol.},
  volume  = {30},
  number  = {1},
  year    = {2023},
  pages   = {41--51},
}

\end{document}